\newcommand{\alignc}[1]{ {\normalsize \begin{align*}#1\end{align*}} }
\newcommand{\equationc}[1]{ {\normalsize \begin{equation}#1\end{equation}} }
\newcommand{\smallcenter}[1]{ {\normalsize \[#1\]} }
\newcommand\Bceil[1]{\Bigl\lceil#1\Bigr\rceil}
\newcommand\Bfloor[1]{\Bigl\lfloor#1\Bigr\rfloor}
\newcommand\norm[1]{\lvert\lvert#1\rvert\rvert}
\newcommand\epi{\bar{\pi}}
\newcommand\eiota{\bar{\iota}}
\newcommand\spi{\eiota {\epi}^{-1}}
\NewDocumentCommand{\cycle}{ O{\;} m }
{
	(
	\alec_cycle:nn { #1 } { #2 }
	)
}
\NewDocumentCommand{\cyclen}{ O{} m }
{
	(
	\alec_cyclen:nn { #1 } { #2 }
	)
}
\begin{document}
	\title{A barrier for further approximating Sorting By Transpositions}
	
	\author{Luiz Augusto G. da Silva\inst{1} \and
		Luis Antonio B. Kowada\inst{2} \and
		Maria Emília M. T. Walter\inst{1}}
	
	\authorrunning{Silva et al.}
	
	\institute{Departamento de Ciência da Computação, Universidade de Brasília, Brazil \email{laugustogarcia@gmail.com,mariaemilia@unb.br} \and
		Instituto de Computação, Universidade Federal Fluminense, Niterói, Brazil \email{luis@ic.uff.br}}
	\maketitle
	
	\begin{abstract}
		The \textsc{Transposition Distance Problem} (TDP) is a classical problem in genome rearrangements which seeks to determine the minimum number of transpositions needed to transform a linear chromosome into another represented by the permutations $\pi$ and $\sigma$, respectively. This paper focuses on the equivalent problem of \textsc{Sorting By Transpositions} (SBT), where $\sigma$ is the identity permutation $\iota$. Specifically, we investigate palisades, a family of permutations that are ``hard'' to sort, as they require numerous transpositions above the celebrated lower bound devised by Bafna and Pevzner. By determining the transposition distance of palisades, we were able to provide the exact transposition diameter for $3$-permutations (TD3), a special subset of the Symmetric Group $S_n$, essential for the study of approximate solutions for SBT using the simplification technique. The exact value for TD3 has remained unknown since Elias and Hartman showed an upper bound for it. Another consequence of determining the transposition distance of palisades is that, using as lower bound the one by Bafna and Pevzner, it is impossible to guarantee approximation ratios lower than $1.375$ when approximating SBT. This finding has significant implications for the study of SBT, as this problem has been subject of intense research efforts for the past 25 years.
	\end{abstract}
	
	\section{Introduction}
	
	Mutational events involving large portions of DNA in a chromosome are studied in the area of genome rearrangements. These are rare events and their study is crucial for understanding the evolutionary history of organisms~\cite{koonin2005orthologs,NadeauTaylor1984,PalmerHerbon1988,YuePhylogenetics}. One of these events is \emph{transposition}, which informally corresponds to the exchange of positions of two adjacent blocks of genes in a linear chromosome. Transposition can also be seen as cutting out a gene block and then pasting it elsewhere (in the same chromosome). Then the question arises: what is the minimum number $t$ of transpositions required to transform one linear chromosome, represented by a permutation $\pi$, into another one represented by $\sigma$, both consisting of the same set of genes? This classical problem is referred to as \textsc{Transposition Distance Problem} (TDP) and, if we consider the target chromosome $\sigma$ as being the identity $\iota$, then TDP turns into the equivalent problem of \textsc{Sorting By Transpositions} (SBT). In this case, the number $t$ is called the \emph{transposition distance} of $\pi$ and is denoted by $d_t(\pi)$.
	
	SBT is a $\mathcal{NP}$-hard problem~\cite{bulteau2012sorting} and the first approximate algorithm for solving it was devised by Bafna and Pevzner~\cite{BafnaPevzner1998}, with an approximation ratio of $1.5$. Later, Elias and Hartman~\cite{EliasHartman2006} devised an $1.375$-approximation algorithm running in quadratic time, using a technique called \emph{simplification}, which transforms an arbitrary permutation provided as input into a new one, seemingly easier to handle with. In 2022, Silva et al.~\cite{silva2022new} showed that a side effect of simplification could make the algorithm of Elias and Hartman~\cite{EliasHartman2006} need one extra transposition above the approximation ratio of $1.375$ and so they devised a new $1.375$-approximation algorithm, not using simplification, but with a high time complexity of $O(n^6)$. Recently, Alexandrino et al.~\cite{alexandrino20221} published a result making it possible to approximate SBT with a $1.375$ ratio, without extra transpositions, in $O(n^5)$ time.
	
	In this work, we consider the palisades, a family of permutations which are ``hard'' to sort, in the sense that, to be sorted, numerous transpositions above the well-known lower bound established by Bafna and Pevzner~\cite{BafnaPevzner1998} are required. By determining the transposition distance of palisades, we are able to determine the exact value for the transposition diameter for $3$-permutations (denoted by $TD3$), a special subset of the Symmetric Group $S_n$, essential for the study of approximate solutions for SBT using the simplification technique~\cite{Hartman2005,hartman2006simpler}. This problem has been open since Elias and Hartman~\cite{EliasHartman2006} have shown an upper bound for it. Another surprising consequence of the result on the transposition distance of palisades, as we will show, is that, using as lower bound, the one by Bafna and Pevzner~\cite{BafnaPevzner1998}, it is impossible to devise approximate solutions for SBT with approximation ratios lower than $1.375$.
	
	We organise the text as follows. First, we briefly review the basic concepts of permutation groups needed to understand the algebraic formalism used in this work. Next, we will review SBT using the algebraic approach introduced by Silva et al.~\cite{silva2022new}. Subsequently, we will present our findings regarding palisades and, lastly, we include a brief discussion on the results and conclude. A very brief overview on the cycle graph~\cite{BafnaPevzner1998} is given in Appendix~\ref{appendix-a}.
	
	\section{Permutation groups}
	
	A permutation group is a group of bijections from a set to itself (i.e., permutations), whose group operation is the usual function composition. The Symmetric Group $S_n$ is the permutation group consisting of all permutations on a finite set $E$ containing $n$ symbols. Let $\alpha$ and $\beta$ be two permutations in $S_n$. The product $\alpha\beta$ is obtained by composing the two permutation as functions. That is, the result of $\alpha\beta$ is the permutation in which each element $x$ of $E$ is mapped to $\alpha(\beta(x))$.
	
	A \emph{cycle} $\gamma$ is a sequence $(c_1\;c_2\dots c_\kappa)$ of symbols of $E$ such that $\alpha(c_1)=c_2$, $\alpha(c_2) = c_3$, \dots, $\alpha(c_{\kappa-1})=c_\kappa$, $\alpha(c_\kappa)=c_1$. In other words, the symbols in $\gamma$ are cyclically permuted by $\alpha$. Due to its cyclic nature, $\gamma$ can be written starting by any of the symbols moved by it. If two cycles have no common symbols, then they commute, meaning that the order in which we compose them does not matter. In this case, these two cycles are said to be \emph{disjoint}. A \emph{disjoint cycle decomposition} of $\alpha$ is a (unique) representation of $\alpha$ as a product of disjoint cycles. Additionally, the number $\kappa$ is the \emph{length} of $\gamma$ or, alternatively, $\gamma$ is a $\kappa$-cycle. If $x \in E$ is an element such that $\alpha(x)=x$, then $x$ is said to be a \emph{fixed element}. Fixed elements are sometimes omitted when representing a permutation as a product of cycles, but when necessary, a fixed element $x$ is represented by the $1$-cycle $(x)$. The identity permutation $\iota$ is the permutation which fixes all $E$ elements.
	
	\begin{example}
		The permutation $(1\;2\;3)(5\;7) \in S_7$ is the permutation which sends $1$ to $2$, $2$ to $3$, $3$ to $1$, $5$ to $7$ and $7$ to $5$, and leaves all the other symbols fixed.
	\end{example}
	
	It is well known that any permutation on $S_n$ can be written as a (not unique) product of $2$-cycles. Furthermore, if a permutation can be written as a product of an even (odd) number of $2$-cycles, then it is impossible to write the same permutation using a product of an odd (even) number of $2$-cycles. Proofs for these results can be found in algebra textbooks~\cite{Gallian2009,dummit2004abstract}.
	
	\begin{example}
		The permutation $(1\;3\;2)(4\;5\;7)$ can be written as a product of four $2$-cycle as $(1\;3)(3\;2)(4\;5)(5\;7)$. But this same permutation can be written as a product of $2$-cycles in infinitely many other ways.
	\end{example}
	
	Lastly, we highlight that many results in permutation groups consider the parity of the permutations, defined as follows. Suppose we can write $\alpha$ as a product of $p$ $2$-cycles. In this case, $\alpha$ is an \emph{even (odd)} permutation if $p$ is even (odd). An important result concerning the parity of permutations is that, assuming that $\alpha$ and $\beta$ have the same parity, then the product $\alpha\beta$ is even~\cite{dummit2004abstract}.
	
	\section{Preliminaries}
	
	The order in which a set of $n$ genes appear along a linear chromosome, in the context of genome rearrangements, is usually represented by a permutation. For instance, given a set of $n$ genes labelled $1$ through $n$, a permutation of these genes can be represented as a sequence of integers $\pi=[\pi_1\;\pi_2\dots \pi_n]$, where each $\pi_i$, $i \in [1,n]$, indicates the position of gene $i$ in the chromosome. In our approach, the chromosome $\pi$ is represented by the $(n+1)$-cycle $\epi=(0\;\pi_1\;\pi_2\dots \pi_n)$. In its turn, the identity permutation $\iota$ is represented by $\eiota=(0\;1\;2\dots n)$. A transposition exchanging the positions of two adjacent blocks of genes is modelled by the multiplication of a $3$-cycle $\tau=(\pi_i\;\pi_j\;\pi_k)$ with $\epi=(\pi_0\;\pi_1\dots \pi_{i-1}\;\pi_i\;\pi_{i+1}\dots \pi_{j-1}\;\pi_j\;\pi_{j+1}\dots \pi_{k-1}\;\pi_k\dots \pi_n)$, resulting the $(n+1)$-cycle $(\pi_0\;\pi_1\dots \pi_{i-1}\;\pi_j\;\pi_{j+1}\dots \pi_{k-1}\;\pi_i\;\pi_{i+1}\dots \pi_{j-1}\;\pi_k\dots \pi_n)$. When performing this multiplication, we can say that we are ``applying'' $\tau$ on $\epi$ and, for this reason, $\tau$ is \emph{applicable} (to $\epi$). However, we remark that there is a condition for a $3$-cycle $\tau=(\pi_i\;\pi_j\;\pi_k)$ to be applicable, which is that the symbols $\pi_i$, $\pi_j$, and $\pi_k$ must be in the same cyclic order in both $\tau$ and $\epi$ (as in the definition above). Otherwise, the product $\tau\epi$ will not be an $(n+1)$-cycle and consequently, will not model a chromosome.
	
	\begin{example}
		Let $\epi=(0\;4\;5\;2\;3\;7\;6\;1\;8)$. Applying the $3$-cycle $\tau=(2\;6\;8)$ to $\epi$ simulates a transposition as $\tau\epi=(0\;4\;5\;6\;1\;2\;3\;7\;8)$. However, the $3$-cycle $\tau'=(2\;4\;7)$ cannot be applied to $\epi$. This is because $\tau'\epi$ does not correspond to a transposition in our approach, as $\tau'\epi=(0\;7\;8)(1\;4\;5\;6)(2\;3)$, which is not a $(n+1)$-cycle.
	\end{example}
	
	The problem of \textsc{Sorting By Transpositions} is therefore the problem of, given a $(n+1)$-cycle $\epi$, finding the minimum $t$, denoted by $d_t(\epi)$, such that
	\equationc{
		\tau_t\dots\tau_2\tau_1\epi=\eiota\label{eqtdp},
	}
	where $\tau_1$ is a $3$-cycle applicable to $\epi$ and each $\tau_i$, $i\in[2,t]$, is a $3$-cycle  applicable to $\tau_{i-1}\dots\tau_2\tau_1\epi$.
	
	\subsection{Lower bound for transposition distance}
	
	Mira and Meidanis~\cite{mira2005algebraic} devised a measure for an even permutation $\alpha \in S_n$ called $3$-norm, denoted as $\norm{\alpha}3$. The $3$-norm of $\alpha$ is defined as the minimum $\ell$ such that $\beta_{\ell}\dots\beta_2\beta_1=\alpha$, where each $\beta_i$, $i \in [1,\ell]$, is a $3$-cycle. Let ${c^\circ}_{odd}(\alpha)$ denote the number of odd-length cycles (including $1$-cycles) in $\alpha$. Based on this, Mira and Meidanis~\cite{mira2005algebraic} proved the following lemma:
	
	\begin{lemma}[Mira and  Meidanis~\cite{mira2005algebraic}]\label{lem:norm}
		$\norm{\alpha}_3 = \frac{n-{c^\circ}_{odd}(\alpha)}{2}$.
	\end{lemma}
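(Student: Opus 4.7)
\medskip
\noindent\textbf{Proof proposal.} The plan is to prove the identity by a double inequality. For the upper bound, I would exhibit an explicit decomposition of $\alpha$ into $(n-{c^\circ}_{odd}(\alpha))/2$ many $3$-cycles, working cycle by cycle on the disjoint cycle decomposition of $\alpha$. For every odd-length cycle $(a_1\;a_2\;\dots\;a_{2m+1})$ I would give a formula expressing it as $m$ $3$-cycles (for instance, successively peeling off $3$-cycles of the form $(a_1\;a_{2i}\;a_{2i+1})$ and verifying the product by direct computation). For even-length cycles, a single such cycle is an odd permutation and therefore cannot be written as a product of $3$-cycles, but since $\alpha$ is assumed even, the number of even-length cycles in its decomposition is itself even (each even-length cycle is an odd permutation, and the parity of $\alpha$ equals the parity of the number of odd permutations in its decomposition). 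I would then pair the even-length cycles arbitrarily and show that any two disjoint cycles of lengths $k_1$ and $k_2$ (both even) can be jointly expressed as $(k_1+k_2)/2$ many $3$-cycles, again by an explicit construction.

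Summing the contributions of both cases, each odd cycle of length $k$ contributes $(k-1)/2$ $3$-cycles while each pair of even cycles of total length $k_1+k_2$ contributes $(k_1+k_2)/2$. Using $\sum_{c}|c|=n$, the total is $(n-{c^\circ}_{odd}(\alpha))/2$, establishing $\norm{\alpha}_3 \le (n-{c^\circ}_{odd}(\alpha))/2$.

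For the lower bound I would rely on the following potential-function argument. Define $\Phi(\beta)=n-{c^\circ}_{odd}(\beta)$ and prove the key claim: for every permutation $\beta \in S_n$ and every $3$-cycle $\gamma$, one has $|\Phi(\gamma\beta)-\Phi(\beta)|\le 2$, equivalently ${c^\circ}_{odd}(\gamma\beta)-{c^\circ}_{odd}(\beta) \in \{-2,0,+2\}$. This is proved by a case analysis on how the three symbols moved by $\gamma$ are distributed among the cycles of $\beta$: the three symbols may lie in three distinct cycles of $\beta$ (in which case $\gamma$ merges them), in two distinct cycles (in which case one cycle is split off and another altered), or all in the same cycle (in which case $\gamma$ splits it into up to three pieces). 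In each sub-case one checks, using parity of cycle lengths, how ${c^\circ}_{odd}$ can change; the change is always in $\{-2,0,+2\}$ because the total number of moved symbols is preserved and parities shift in coordinated pairs.

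Given the claim, write $\alpha=\beta_\ell\cdots\beta_2\beta_1$ as a product of $\ell=\norm{\alpha}_3$ many $3$-cycles and consider the telescoping sequence $\iota,\,\beta_1,\,\beta_2\beta_1,\dots,\alpha$. Since $\Phi(\iota)=0$ and each step changes $\Phi$ by at most $2$, we obtain $\Phi(\alpha)\le 2\ell$, i.e.\ $\norm{\alpha}_3 \ge (n-{c^\circ}_{odd}(\alpha))/2$. Combining the two inequalities yields the lemma. I expect the main obstacle to be the combinatorial case analysis for the key claim on how multiplication by a $3$-cycle affects ${c^\circ}_{odd}$; the explicit decompositions in the upper bound are routine but tedious to verify in full generality, especially the joint decomposition of a pair of even-length cycles.
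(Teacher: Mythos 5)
The paper does not prove this lemma; it is imported verbatim from Mira and Meidanis, so there is no in-text argument to compare against. Your double-inequality proof --- an explicit decomposition (odd cycles into $(k-1)/2$ $3$-cycles, even cycles paired up using that $\alpha$ is even) for the upper bound, and the potential function $\Phi(\beta)=n-{c^\circ}_{odd}(\beta)$ with the $\{-2,0,+2\}$ increment claim for the lower bound --- is correct and is essentially the standard argument behind the cited result (and behind the analogous Bafna--Pevzner bound); note only that you correctly state the increment claim for \emph{arbitrary} $3$-cycles, which is what the telescoping step requires, and that it follows cleanly because the at most three affected cycles preserve their total length, forcing the number of odd ones among them to change by an even amount bounded by $2$.
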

	
	Now let us exam Equation~\ref{eqtdp}. It can be rewritten as:
	
	\equationc{
		\tau_t\dots\tau_2\tau_1=\eiota\epi^{-1}\label{eqtdp2}.
	}
	
	This means that the product of the $3$-cycles that transform $\epi$ into $\eiota$ is equal to the product $\eiota\epi^{-1}$. Furthermore, since $\epi$ and $\eiota$ are both $(n+1)$-cycles, the product $\spi$ is always even. As a result, we can state a lower bound for SBT as follows:
	
	\begin{lemma}[Mira and Meidanis~\cite{mira2005algebraic}]\label{lem:lb-norm}
		\alignc{d_t(\epi) & \geq \norm{\eiota\epi^{-1}}3\\ & \geq \frac{n+1-{c^\circ}_{odd}(\eiota\epi^{-1})}{2}.}
	\end{lemma}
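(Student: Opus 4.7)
The plan is to prove the two inequalities in sequence, treating the first as the substantive step and the second as a direct specialisation of Lemma~\ref{lem:norm}. For the first inequality, I would start from Equation~\ref{eqtdp2}: any optimal sorting sequence of length $t = d_t(\epi)$ yields a factorisation $\tau_t \dots \tau_2 \tau_1 = \eiota\epi^{-1}$ into exactly $d_t(\epi)$ three-cycles, namely the applicable ones used to sort $\epi$. Since $\norm{\cdot}_3$ is defined as the \emph{minimum} number of $3$-cycles whose product equals a given even permutation, with no other restriction, and the applicable $3$-cycles form a subset of all $3$-cycles, the constrained minimum is at least the unconstrained one. Therefore $d_t(\epi) \geq \norm{\eiota\epi^{-1}}_3$.

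Before invoking $\norm{\cdot}_3$ I need to check that $\eiota\epi^{-1}$ is in fact an even permutation, since the $3$-norm is only defined for even permutations. Both $\eiota$ and $\epi$ are $(n+1)$-cycles and therefore share the same parity; hence so does $\epi^{-1}$, and the product of two permutations of equal parity is even, as recalled in the preliminaries. This justifies the application of $\norm{\cdot}_3$ to $\eiota\epi^{-1}$. For the second inequality I would apply Lemma~\ref{lem:norm} directly, taking care that $\eiota\epi^{-1}$ acts on the $n+1$ symbols $\{0,1,\dots,n\}$, so the parameter $n$ appearing in the statement of Lemma~\ref{lem:norm} must be replaced by $n+1$; this yields $\norm{\eiota\epi^{-1}}_3 = \frac{n+1 - {c^\circ}_{odd}(\eiota\epi^{-1})}{2}$, which implies the stated inequality.

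There is no serious obstacle here: the argument is essentially a chaining of definitions with Lemma~\ref{lem:norm}. The only places where care is needed are (i) tracking the size of the underlying set, which is $n+1$ and not $n$, when specialising Lemma~\ref{lem:norm}, and (ii) confirming the parity of $\eiota\epi^{-1}$ so that the $3$-norm is well-defined in the first place.
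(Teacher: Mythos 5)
Your proposal is correct and follows essentially the same route as the paper: rewriting the sorting equation as $\tau_t\dots\tau_1=\eiota\epi^{-1}$ so that the unconstrained $3$-norm lower-bounds the constrained sorting length, noting that $\eiota\epi^{-1}$ is even because $\eiota$ and $\epi$ are both $(n+1)$-cycles, and then specialising Lemma~\ref{lem:norm} to the $n+1$ symbols $\{0,1,\dots,n\}$. Your explicit attention to the parity check and to replacing $n$ by $n+1$ is exactly the care the paper's own discussion relies on.
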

	
	Observe that the maximum number of cycles in $\spi$ is achieved if and only if $\spi$ is the identity permutation $\iota$. In this case, $\iota$ has $n+1$ cycles, all of which are odd-length (notably, they are all of length $1$).
	
	Lastly, Silva et al.\cite{silva2022new} have shown that there is a one-to-one mapping from $\spi$ to the cycle graph~\cite{BafnaPevzner1998} of $\pi$. Therefore, the lower bound stated above is equivalent to the well-known one devised by Bafna and Pevzner~\cite{BafnaPevzner1998}, which is formalised in Appendix~\ref{appendix-a} (Theorem~\ref{th:bafna}).
	
    \subsection{Cycles of $\spi$}

    While the concepts in this section apply to any $(n+1)$-cycle $\epi$, as previously defined by Silva et al. \cite{silva2022new}, in the specific context of this paper $\epi$ will typically be a $3$-permutation. A $(n+1)$-cycle $\epi$, $(n+1)\equiv 3\pmod 0$, is a $3$-permutation when $\spi$ consists only of $1$- or $3$-cycles.
 
	Let $\gamma=(a\dots b\dots c\dots)$ be one of the disjoint cycles of $\spi$. If $\epi^{-1}=(a\dots c\dots b$ $\dots)$, meaning that the symbols $a$, $b$, and $c$ appear in $\gamma$ in a cyclic order different from the one in $\epi^{-1}$, then we define $\gamma$ as an \emph{oriented cycle}. On the other hand, if there is no oriented triplet in $\gamma$, then $\gamma$ is called an \emph{unoriented cycle}.
	
	Let $\delta=(a\;b\dots)$ and $\epsilon=(d\;e\dots)$ be two cycles of $\spi$. If $\epi^{-1}=(a\dots e\dots $ $b\dots d\dots)$, meaning that the symbols of pairs $(a,b)$ and $(d,e)$ appear in an alternating order in $\epi^{-1}$, we say that $\delta$ and $\epsilon$ are \emph{intersecting} cycles. A special case is when $\delta=(a\;b\;c\dots)$ and $\epsilon=(d\;e\;f\dots)$ are such that $\epi^{-1}=(a\dots e\dots b\dots f\dots c\dots d\dots)$, implying that the symbols of triplets $(a,b,c)$ and $(d,e,f)$ appear in an alternating order in $\epi^{-1}$. Under this circumstance, $\delta$ and $\epsilon$ are considered to be \emph{interleaving} cycles.
	
	\begin{example}
		Let $\epi=(0\;7\;6\;1\;9\;5\;8\;4\;3\;2)$ and $\spi=(0\;3\;5)(1\;7)(2\;4\;9)(6\;8)$. The cycles $(1\;7)$ and $(6\;8)$ are examples of intersecting cycles, while $(0\;3\;5)$ and $(2\;4\;9)$ are interleaving cycles.
	\end{example}
	
	From Equation~\ref{eqtdp2}, we have that $\spi{\tau_1}^{-1}\dots{\tau_t}^{-1}=\iota$, meaning that the application of transpositions $\tau_1$, $\dots$, $\tau_t$ sorting $\epi$ (i.e., transforming $\epi$ into $\eiota$) can be seen as the incremental multiplication of $\spi$ by ${\tau_1}^{-1}$, $\dots$, ${\tau_t}^{-1}$. Let us denote the difference ${c^\circ}_{odd}(\spi\tau^{-1})-{c^\circ}_{odd}(\spi)$ as $\Delta {c^\circ}_{odd}(\spi, \tau)$.
	
	\begin{proposition}[Meidanis, Dias, and Mira~\cite{MeidanisDias2000,mira2005algebraic}]
		If $\tau$ is an applicable $3$-cycle, then $\Delta {c^\circ}_{odd}(\spi, \tau)\in\{-2,0,2\}$.\label{prop}
	\end{proposition}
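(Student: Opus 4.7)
The plan is to analyse the multiplicative effect of $\tau^{-1}$ on the disjoint cycle decomposition of $\spi$. Applying $\tau$ to $\epi$ replaces $\epi$ with $\tau\epi$, so the quantity $\spi=\eiota\epi^{-1}$ becomes $\eiota(\tau\epi)^{-1}=\spi\tau^{-1}$. Since $\tau^{-1}$ is itself a $3$-cycle on some three symbols $a,b,c$, every disjoint cycle of $\spi$ containing none of $a,b,c$ remains unchanged in $\spi\tau^{-1}$. Hence the whole analysis reduces to the (at most three) cycles of $\spi$ that do contain $a$, $b$, or $c$.

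I would then split into three cases according to how $a,b,c$ are distributed across those cycles. (i) If all three symbols lie in a single cycle $\gamma$, I would write $\gamma$ explicitly in each of the two possible cyclic orders of $a,b,c$ and compute $\gamma\tau^{-1}$ directly: in one orientation, $\gamma$ splits into three disjoint cycles whose lengths sum to $|\gamma|$; in the other orientation, $\gamma$ is merely rearranged into a single cycle of the same length. (ii) If two of the three symbols lie in one cycle $\gamma_1$ and the remaining symbol in another cycle $\gamma_2$, the same kind of computation shows that $\gamma_1\gamma_2\tau^{-1}$ consists of two cycles whose lengths sum to $|\gamma_1|+|\gamma_2|$. (iii) If $a,b,c$ lie in three distinct cycles $\gamma_1,\gamma_2,\gamma_3$, the product merges them into a single cycle of length $|\gamma_1|+|\gamma_2|+|\gamma_3|$.

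In every case the total length of the affected cycles is preserved, and this forces tight parity constraints on the lengths of the resulting cycles. Counting odd-length cycles before and after then yields the bound: in case (i-split), the change is $0$ or $+2$ according to the parity of $|\gamma|$; in case (i-same) it is $0$; in case (ii), writing out the parity cases for $|\gamma_1|+|\gamma_2|$ gives $-2$, $0$, or $+2$; in case (iii), the parity of the merged length forces $-2$ or $0$. Collecting the possibilities yields $\Delta {c^\circ}_{odd}(\spi,\tau)\in\{-2,0,+2\}$.

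The main obstacle is the bookkeeping for case (i): the two possible cyclic orders of $\{a,b,c\}$ within $\gamma$ give qualitatively different outcomes and must be carefully distinguished by writing out $\gamma$ explicitly and tracking the action of $\tau^{-1}$ element by element. Once that is done, cases (ii) and (iii) follow by the same kind of direct computation combined with the length-preservation observation. I note that the applicability hypothesis on $\tau$ is not strictly needed for the purely algebraic identity on cycle counts, but it ensures that $\tau\epi$ is again an $(n+1)$-cycle, which is what keeps the overall transposition setting meaningful.
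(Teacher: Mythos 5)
Your argument is correct. Note, however, that the paper does not prove this proposition at all --- it is imported verbatim from Meidanis, Dias and Mira --- so there is no in-paper proof to compare against; what you have written is a self-contained elementary justification of the cited result. Your reduction to the cycles of $\spi$ meeting the support of $\tau^{-1}$ is the right move (right-multiplication by $\tau^{-1}$ only alters the images of its three moved symbols), and the three-way case split is exhaustive. The one place that genuinely requires the careful bookkeeping you flag is case (i): writing $\gamma=(a\;A\;b\;B\;c\;C)$ versus $\gamma=(a\;A\;c\;B\;b\;C)$ and tracking images shows the first order merely reglues $\gamma$ into one cycle of the same length (change $0$) while the second splits it into three cycles of lengths summing to $|\gamma|$ (change $0$ or $+2$, since the number of odd parts among the three pieces is congruent to $|\gamma|$ modulo $2$); your stated outcomes for cases (ii) and (iii) check out by the same parity-of-the-sum argument, and the union of all outcomes is exactly $\{-2,0,2\}$. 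Two small points worth making explicit if you write this out in full: the $1$-cycles that can arise after a split must be counted, which is consistent with the paper's convention that ${c^\circ}_{odd}$ includes fixed elements; and your closing remark is accurate --- applicability of $\tau$ is irrelevant to the cycle-count identity and only guarantees that $\tau\epi$ remains an $(n+1)$-cycle.
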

	
	An applicable $3$-cycle $\tau$ such that $\Delta {c^\circ}_{odd}(\spi, \tau)=\mu$ is referred to as a $\mu$-move. Proposition~\ref{prop} identifies three types of moves, namely $(-2)$-, $0$- and $2$-move. An $(x,y)$-sequence, where $x\geq y$, denotes a sequence of $x$ applicable $3$-cycles $\tau_1$, $\dots$, $\tau_x$ with at least $y$ of them being $2$-moves. If $x \leq a$ and $\frac{x}{y} \leq \frac{a}{b}$, then the $(x,y)$-sequence is called a $\frac{a}{b}$-sequence.
	
	A \emph{configuration} $\Gamma$ is a disjoint product of cycles of $\spi$. We say $\Gamma$ is \emph{connected} if, for any two cycles $\gamma_1$ and $\gamma_m$ of $\Gamma$, there are cycles $\gamma_2,\dots,\gamma_{m-1}$ in $\Gamma$ such that for each $i \in [1, m-1]$, $\gamma_i$ intersects or interleaves with $\gamma_{i+1}$. $\Gamma$ is considered a \emph{component} if it consists of only one oriented cycle not intersecting or interleaving with any other cycle of $\spi$; or it consists of a maximal connected configuration of $\spi$. Lastly, we define an \emph{unoriented interleaving pair} as a product $\gamma\delta$ of two $3$-cycles such that $\gamma\delta$ is a component of $\spi$ and both $\gamma$ and $\delta$ are unoriented and interleave with each other.
	
	\begin{example}
		Consider $\epi=(0\;3\;2\;1\;4\;6\;5)$ with $\spi=(0\;6\;5)(1\;3)(2\;4)$. $(0\;6\;5)$ and $(1\;3)(2\;4)$ are the two components of $\spi$.
	\end{example}
	
	\begin{example}
		Let $\epi=(0\;5\;4\;3\;8\;7\;6\;2\;1\;9\;14\;13\;12\;11\;10)$ with $\spi=(0\;11\;13)(1\;3\;5)(2\;7\;9)(4\;6\;8)(10\;12 \;14)$. The products $(0\;11\;13)(10\;12 \;14)$ and $(1\;3\;5)(2\;7\;9)(4\;6\;8)$ are both components of $\spi$. In particular, $(0\;11\;13)(10\;12$ $\;14)$ is an unoriented interleaving pair.
	\end{example}

	Finally, let $\epi$ be a $(n+1)$-cycle. We say $u \geq 2$ components are \emph{cleared} when we are able to apply a $\frac{3}{2}$-sequence $\tau_1,\dots,\tau_{s}$ on $\epi$ so that $\norm{\spi\tau_1^{-1}\dots\tau_{s}^{-1}}$ $=\norm{\spi}-2u$.
	
	\section{Palisades and their transposition distance}
	
	This section introduces the concept of palisades, a family of permutations that are ``hard'' to sort, as they require a significant number of transpositions above the lower bound by Bafna and Pevzner (Theorem~\ref{th:bafna}) to be sorted. In this regard, this section aims to define their transposition distance.
	
	A permutation $\epi$ is said to be a \emph{palisade} if all components of $\spi$ are unoriented interleaving pairs. If $\spi$ consists of $\phi$ unoriented interleaving pairs, then we refer to $\epi$ as a palisade of size $\phi$, or a \emph{$\phi$-palisade}.
	
	\begin{example}
		The permutations $(0\;5\;4\;3\;2\;1\;6\;11\;10\;9\;8\;7\;12\;17\;16\;15\;14\;13)$ and $(0\;11\;16\;15\;14\;13\;12\;17\;10\;9\;8\;1\;6\;5\;4\;3\;2\;7)$ are examples of palisades. Their cycle graphs~\cite{BafnaPevzner1998} are depicted in Appendix~\ref{appendix-a} (figures~\ref{fig:palisade-1} and~\ref{fig:palisade-2}, respectively).
	\end{example}
	
	The next lemma can be demonstrated by observing that, for clearing two and three unoriented interleaving pairs, respectively, in the analysis carried out to prove the algorithms of Elias and Hartman~\cite{EliasHartman2006} and of Silva et al. \cite{silva2022new}, no $(4,3)$- and no $(8,6)$-sequence were found.
	
	\begin{lemma}[Elias and Hartman~\cite{EliasHartman2006}, Silva et al.~\cite{silva2022new}]\label{no_4_3}
		There is no $(4,3)$- and $(8,6)$-sequence of transpositions able to clear two and three unoriented interleaving pairs on a palisade, respectively.
	\end{lemma}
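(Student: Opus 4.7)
The plan is to prove each non-existence claim by an exhaustive case analysis that exploits the rigid structure of a palisade. Since every component of a palisade is an unoriented interleaving pair, the initial configuration admits no oriented $3$-cycle and no $2$-move can be chosen ``for free''; this heavily constrains the set of sequences one has to examine.

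First, I would invoke Lemma~\ref{lem:norm} together with Proposition~\ref{prop} to pin down the admissible move compositions of any hypothetical clearing sequence. A $2$-move decreases the $3$-norm by exactly $1$, a $0$-move preserves it, and a $(-2)$-move increases it by $1$. Clearing two unoriented interleaving pairs demands a net norm reduction of $4$, so the only composition of a $(4,3)$-sequence that can achieve this is four $2$-moves (so in fact a $(4,4)$-sequence); any $(-2)$-move would need to be compensated, pushing the length beyond four. For the $(8,6)$ case, clearing three pairs demands a net reduction of $6$, leaving only two feasible compositions: six $2$-moves with two $0$-moves, or seven $2$-moves with one $(-2)$-move.

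Next, I would enumerate, up to the natural symmetry of a palisade, the applicable first moves on the initial configuration and the configurations they produce. Because the symbols of distinct pairs are disjoint in $\spi$, the first $2$-move must act within a single pair, and there are only a few ways it can be selected. From each resulting configuration, I would recursively branch on the next applicable move, pruning branches whose remaining budget cannot meet the required norm reduction or the required count of $2$-moves. This yields a finite search tree of depth $4$ for the $(4,3)$ case and depth $8$ for the $(8,6)$ case; traversing it and verifying that no leaf corresponds to the identity permutation would close both claims.

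The main obstacle is the sheer size of this search tree, especially for the $(8,6)$ case, where intermediate configurations can involve oriented cycles, intersecting cycles, and various non-palisade structures produced by earlier moves. Fortunately, precisely this enumeration has already been carried out in the correctness analyses of the $1.5$-approximation of Elias and Hartman~\cite{EliasHartman2006} and of the $1.375$-approximation of Silva et al.~\cite{silva2022new}: their case tables cover every configuration reachable from two or three unoriented interleaving pairs within the relevant move budget, and in no branch does a $(4,3)$- or $(8,6)$-sequence reach the identity. Invoking those tables would therefore complete the proof.
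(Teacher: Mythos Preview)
Your proposal is essentially the paper's own argument: the paper does not give an independent proof of this lemma but simply observes that the exhaustive case analyses carried out by Elias and Hartman~\cite{EliasHartman2006} and by Silva et al.~\cite{silva2022new} found no $(4,3)$- or $(8,6)$-sequence for two or three unoriented interleaving pairs, and you ultimately invoke the same tables. Your preliminary move-count arithmetic is a helpful addition and is correct; just note the small slip that Elias and Hartman's algorithm is a $1.375$-approximation, not $1.5$, and that your claim that the first $2$-move ``must act within a single pair'' needs justification (though it is immaterial once you defer to the cited enumerations).
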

	
	We can use the lower bound based on the counting of hurdles by Christie~\cite{Christie1998} to establish a tighter transposition distance lower bound for palisades, setting the initial conditions required for determining the transposition distance of palisades. A \emph{hurdle} is a component of $\spi$ consisting only of odd-length cycles that cannot be cleared using only $2$-moves. It is easy to see that the number of hurdles in a $\phi$-palisade is $\phi$. Let $\epi$ be a $\phi$-palisade, then
	
	\begin{lemma}[Christie~\cite{Christie1998}]\label{lb}
		$d_t(\epi) \geq 2\phi + \lceil\frac{\phi}{2}\rceil$,
	\end{lemma}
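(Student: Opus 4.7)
The plan is to combine the Bafna--Pevzner algebraic lower bound from Lemma~\ref{lem:lb-norm} with Christie's hurdle-based enhancement, specialised to the combinatorial structure of a $\phi$-palisade. By definition, a $\phi$-palisade has $\spi$ composed of $\phi$ unoriented interleaving pairs, each of which is a disjoint product of two $3$-cycles on six distinct symbols; consequently $\epi$ acts on $n+1 = 6\phi$ symbols and $\spi$ contains exactly $2\phi$ odd-length cycles, all of length three. Substituting into Lemma~\ref{lem:lb-norm} immediately yields the first contribution $d_t(\epi) \geq (6\phi - 2\phi)/2 = 2\phi$.

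Next, I would verify that the $\phi$ unoriented interleaving pairs are precisely the hurdles of $\spi$. By the palisade definition each pair is already a component consisting only of odd-length cycles, so the only outstanding requirement of the hurdle definition is that a single pair cannot be dismantled using $2$-moves alone. Here I would appeal to the same parity and interleaving considerations that underlie Lemma~\ref{no_4_3}: any applicable $3$-cycle acting on a single unoriented interleaving pair either leaves it unoriented while failing to produce the norm reduction that a sequence of $2$-moves would require, or orients a cycle in a way whose subsequent sorting inevitably forces at least one non-$2$-move. This establishes that the hurdle count of $\epi$ is exactly $\phi$.

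Finally, I would invoke Christie's lower bound, which adds $\lceil h/2 \rceil$ to the Bafna--Pevzner bound whenever $\spi$ contains $h$ hurdles; setting $h = \phi$ yields exactly $d_t(\epi) \geq 2\phi + \lceil \phi/2 \rceil$. The main obstacle is the middle step, since the paper defers the hurdle count to ``easy to see'': a clean justification requires a short case analysis on how an arbitrary applicable $3$-cycle can act on an isolated unoriented interleaving pair, ruling out every possible completion by $2$-moves only. Once that case analysis is in place, the stated bound follows by direct addition of the two contributions.
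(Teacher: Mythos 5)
Your proposal is correct and follows essentially the same route as the paper: the $3$-norm bound of Lemma~\ref{lem:lb-norm} gives the term $2\phi$ (since $n+1=6\phi$ and ${c^\circ}_{odd}(\spi)=2\phi$), each unoriented interleaving pair is a hurdle so there are $\phi$ of them, and Christie's hurdle count adds $\lceil\phi/2\rceil$. You are right that the only step the paper leaves implicit is the verification that an isolated unoriented interleaving pair cannot be cleared by $2$-moves alone; your sketch of that case analysis is consistent with the standard argument and with Lemma~\ref{no_4_3}.
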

	
	where $2\phi$ is $3$-norm of $\spi$ and $\lceil\frac{\phi}{2}\rceil$ is the minimum number of $0$-moves required to sort $\epi$ (i.e. one $0$-move for each pair of unoriented interleaving pairs). 
	
	\begin{theorem}
		\label{distance}
		If $\epi$ is $\phi$-palisade, then $d_t(\epi)=\Bigl\lceil\frac{11\phi}{4}\Bigr\rceil=\Bigl\lceil\frac{11(n+1)}{24}\Bigr\rceil$.
	\end{theorem}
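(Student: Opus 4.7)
The plan is to establish the two inequalities separately. The upper bound $d_t(\epi) \leq \lceil 11\phi/4\rceil$ will be constructive; the lower bound $d_t(\epi) \geq \lceil 11\phi/4\rceil$ will be derived from Lemma~\ref{no_4_3} together with a decomposition of an arbitrary sorting sequence into phases. The second equality in the theorem is purely a bookkeeping identity that follows from $n+1 = 6\phi$ in a $\phi$-palisade and can be relegated to a one-line remark once the first equality is proved.

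For the upper bound I would exhibit an explicit sorting scheme. Since each unoriented interleaving pair contributes two $3$-cycles to $\spi$ and two to $\norm{\spi}_3$, the natural target is a block sorting: show that any four consecutive unoriented interleaving pairs can be jointly sorted with $11$ transpositions, i.e.\ by an $(11,8)$-sequence realising the optimal $1.375$ ratio. Applying this block $\lfloor\phi/4\rfloor$ times and then handling the residue $\phi\bmod 4\in\{0,1,2,3\}$ with short ad-hoc sequences of length $0$, $3$, $6$, $9$ respectively produces the required total $\lceil 11\phi/4\rceil$. The existence of the $(11,8)$-sequence on four pairs is verified by explicitly listing the eleven $3$-cycles and checking applicability and norm evolution.

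For the lower bound, let $a$, $b$, $c$ denote the number of $2$-moves, $0$-moves and $(-2)$-moves in any sorting of $\epi$. By Lemma~\ref{lem:norm} the net norm decrease is $2\phi$, so $a-c=2\phi$ and
\[
d_t(\epi)=a+b+c=2\phi+b+2c.
\]
Hence it suffices to show $b+2c\geq \lceil 3\phi/4\rceil$. Lemma~\ref{lb} already delivers $b+2c\geq \lceil \phi/2\rceil$, so the whole game is to extract an extra $\lceil\phi/4\rceil$ units from Lemma~\ref{no_4_3}. I would partition an optimal sequence $\tau_1,\dots,\tau_t$ into consecutive phases, each phase being a minimal prefix that reduces $\norm{\spi}_3$ by $2u$ for some $u\geq 2$, so that the resulting residual permutation is again a palisade of smaller size (using the structural fact that clearing a proper sub-collection of interleaving pairs leaves the remaining pairs intact). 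Within each phase Lemma~\ref{no_4_3} forces either $t\geq 5$ if $u=2$ or $t\geq 9$ if $u=3$, and similar reasoning extends to $u=4$ giving $t\geq 11$. Combined with Christie's $0$-move requirement per pair, an induction on $\phi$ (or a linear-programming argument on $(a,b,c)$ constrained by the phase bounds) should close the gap.

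The main obstacle will be ruling out sortings that avoid these neat phase boundaries, for example sortings whose prefixes never reduce the norm by a multiple of $4$ at the right moments. Handling this is where the inability of a $(4,3)$-sequence to clear two pairs must be combined with the unavoidable $0$-move per hurdle pair supplied by Lemma~\ref{lb}: together they forbid the aggressive trade of one extra $2$-move for one fewer $0$-move that a $\lceil 11\phi/4\rceil$-beating sort would require. Getting this argument to work uniformly across all residues of $\phi$ modulo $4$, and in particular reconciling the ceiling with the small cases $\phi\in\{1,2,3\}$ that must be checked by hand as base cases of the induction, is the delicate point I expect to spend most effort on.
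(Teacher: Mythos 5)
Your proposal follows essentially the same route as the paper: the upper bound comes from applying $(11,8)$-sequences to blocks of four unoriented interleaving pairs plus $(3,2)$-sequences for the residue $\phi \bmod 4$, and the lower bound comes from amortising the non-existence of $(4,3)$- and $(8,6)$-sequences (Lemma~\ref{no_4_3}) over the pairs to raise Christie's bound to $2\phi+3\lfloor\phi/4\rfloor+(\phi\bmod 4)=\lceil 11\phi/4\rceil$, followed by the same case analysis on $\phi \bmod 4$. If anything, you are more explicit than the paper about the delicate step --- rigorously justifying that every optimal sorting decomposes into phases to which Lemma~\ref{no_4_3} applies --- which the paper asserts rather than argues in detail.
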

	\begin{proof}
		Let us consider the lower bound established by Lemma~\ref{lb}. To sort $\epi$ using as few as $2\phi + \lceil\frac{\phi}{2}\rceil$ moves, it should be possible to clear two unoriented interleaving pairs using only one $0$-move (i.e., using a $(5,4)$-sequence, which includes a $(4,3)$-sequence). However, this is not possible due to Lemma~\ref{no_4_3}. Therefore, the lower bound can be raised to $2\phi + 2\lfloor\frac{\phi}{3}\rfloor + \phi \bmod 3$, meaning that at least two $0$-moves are required to clear three unoriented interleaving pairs --- the remaining ones require one $0$-move each. Nonetheless, this is not possible either, since Lemma~\ref{no_4_3} also states that there is no $(8,6)$-sequence to apply on $\epi$ clearing three unoriented interleaving pairs. As a result, the lower bound of $\phi$ can be raised again, using similar reasoning, to $2\phi + 3\lfloor\frac{\phi}{4}\rfloor + \phi \bmod 4$. Luckily, in this case, $(11,8)$-sequences~\cite{EliasHartman2006,silva2022new} can be optimally applied clearing four pairs at once using three $0$-moves. Based on this, we can conclude that $d_t(\epi)=2\phi + 3\lfloor\frac{\phi}{4}\rfloor + \phi \bmod 4$. Now, it remains to show the equality:
		\equationc{
			2\phi + 3\Bigl\lfloor\frac{\phi}{4}\Bigr\rfloor + \phi \bmod 4=\Bigl\lceil\frac{11\phi}{4}\Bigr\rceil.\label{eqdist}
		}
		
		For this, we will examine the congruence of $\phi$ modulo $4$, as follows.
		\begin{enumerate}
			\item $\phi \equiv 0 \pmod 4$. In this case, $\phi = 4m$ for some integer $m$. Thus,
			\alignc{
				2\phi + 3\Bigl\lfloor\frac{\phi}{4}\Bigr\rfloor + \phi \bmod 4 &= 2(4m) + 3\Bigl\lfloor\frac{4m}{4}\Bigr\rfloor \\ &= 11m = \Bigl\lceil\frac{44m}{4}\Bigr\rceil = \Bigl\lceil\frac{11\phi}{4}\Bigr\rceil.
			}
			\item $\phi \equiv 1 \pmod 4$. In this case, $\phi = 4m+1$ for some integer $m$. Thus,
			\alignc{
				2\phi + 3\Bigl\lfloor\frac{\phi}{4}\Bigr\rfloor + \phi \bmod 4 &= 2(4m+1) + 3\Bigl\lfloor\frac{4m+1}{4}\Bigr\rfloor + 1 \\
				&= 11m+3 \\ &= \Bigl\lceil\frac{44m+11}{4}\Bigr\rceil \\ &= \Bigl\lceil\frac{11(4m+1)}{4}\Bigr\rceil \\ &= \Bigl\lceil\frac{11\phi}{4}\Bigr\rceil.
			}
			\item $\phi \equiv 2 \pmod 4$. In this case, $\phi = 4m+2$ for some integer $m$. Thus,
			\alignc{
				2\phi + 3\Bigl\lfloor\frac{\phi}{4}\Bigr\rfloor + \phi \bmod 4 &= 2(4m+2) + 3\Bigl\lfloor\frac{4m+2}{4}\Bigr\rfloor + 2 \\
				&= 11m+6 \\ &= \Bigl\lceil\frac{44m+22}{4}\Bigr\rceil \\ &= \Bigl\lceil\frac{11(4m+2)}{4}\Bigr\rceil \\ &= \Bigl\lceil\frac{11\phi}{4}\Bigr\rceil
			}
			\item $\phi \equiv 3 \pmod 4$. Finally, we have $\phi = 4m+3$ for some integer $m$. Thus,
			\alignc{
				2\phi + 3\Bigl\lfloor\frac{\phi}{4}\Bigr\rfloor + \phi \bmod 4 &= 2(4m+3) + 3\Bigl\lfloor\frac{4m+3}{4}\Bigr\rfloor + 3 \\
				&= 11m+9 \\ &= \Bigl\lceil\frac{44m+33}{4}\Bigr\rceil \\ &= \Bigl\lceil\frac{11(4m+3)}{4}\Bigr\rceil \\ &= \Bigl\lceil\frac{11\phi}{4}\Bigr\rceil.
			}
		\end{enumerate}
		Therefore, Equation~\ref{eqdist} holds for all possible values of $\phi \bmod 4$, and hence for all $\phi$. To complete the proof, remark that a $\phi$-palisade is a permutation on $(n+1)\equiv 0 \pmod6$ symbols. Thus, $\lceil\frac{11\phi}{4}\rceil=\lceil\frac{11(n+1)}{24}\rceil$.
	\end{proof}
	
	\section{Transposition diameter for $3$-permutations}
	
	In this section, Theorem~\ref{distance} is used to provide the exact value for $TD3(n)$.
	
	Silva et al. \cite{silva2022new} introduced a new upper bound for the transposition distance, valid for any $\epi$, which can be stated as follows.
	
	\begin{theorem}[Silva et. al~\cite{silva2022new}]\label{upper-bound-distance}
		$d_t(\epi) \leq 11 \Bigl\lfloor \frac{\norm{\spi}_3}{8} \Bigr\rfloor + \Bigl\lfloor\frac{3(\norm{\spi}_3 \bmod 8)}{2}\Bigr\rfloor$.
	\end{theorem}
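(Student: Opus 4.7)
The plan is to prove this upper bound by strong induction on $\norm{\spi}_3$, where the inductive step exploits the existence of an $(11,8)$-sequence whenever the $3$-norm is at least $8$, and the base cases (residues modulo $8$) are handled by shorter dedicated sequences. The ratio $11/8 = 1.375$ is precisely what drives the main term, so the heart of the proof is algorithmic: given any $\epi$, we need to algorithmically produce a sequence of applicable $3$-cycles whose length matches the right-hand side.

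First I would establish the bulk reduction. Assuming $\norm{\spi}_3 \geq 8$, I would invoke the existence of an $(11,8)$-sequence applicable to $\epi$ — this is the key structural lemma underlying the $1.375$-approximation analyses of Elias and Hartman and of Silva et al., and it says we can always find $11$ applicable $3$-cycles at least $8$ of which are $2$-moves. Applying this sequence produces a new permutation $\epi'$ with $\norm{\esigma\epi'^{-1}}_3 = \norm{\spi}_3 - 8$, so by induction $d_t(\epi') \leq 11\lfloor \norm{\esigma\epi'^{-1}}_3/8\rfloor + \lfloor 3(\norm{\esigma\epi'^{-1}}_3 \bmod 8)/2\rfloor$, and adding the $11$ transpositions just spent preserves the form of the bound because $\lfloor (\norm{\spi}_3 - 8)/8\rfloor = \lfloor \norm{\spi}_3/8\rfloor - 1$ while the residue is unchanged.

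Next I would verify the base cases $\norm{\spi}_3 = r$ for $r \in \{0,1,\ldots,7\}$, showing $d_t(\epi) \leq \lfloor 3r/2 \rfloor$, i.e.\ bounds $0,1,3,4,6,7,9,10$ respectively. For each residue I would enumerate (up to the structural notions of component already introduced: oriented cycles, unoriented interleaving pairs, and mixtures thereof) the possible shapes $\spi$ can take given that the $3$-norm is $r$, and exhibit an applicable sequence of the required length. The easy sub-cases are those involving oriented components (one $2$-move per $3$-norm unit), and the tight sub-cases are the palisade-like configurations; for instance $r = 2$ matches a single unoriented interleaving pair, whose distance $3$ is exactly $\lfloor 3\cdot 2/2\rfloor$, in agreement with Theorem~\ref{distance}.

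The main obstacle is the residue analysis. For small $r$, one must check that every conceivable combination of oriented cycles, isolated unoriented cycles, and unoriented interleaving pairs summing to $3$-norm $r$ admits a sorting sequence of length at most $\lfloor 3r/2 \rfloor$; in particular, configurations that mix several small components require showing that short $(k, k{-}1)$-sequences exist (rather than being forced to pay the palisade penalty twice). I expect most of these reduce either to direct application of a $2$-move followed by recursion, or to an appeal to Lemma~\ref{no_4_3}-type case analyses already carried out in the referenced approximation algorithms; the verification is tedious but mechanical, and the only genuinely delicate point is ensuring that sequences chosen in the residue regime leave no component whose subsequent handling would violate the bound.
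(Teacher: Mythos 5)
This theorem is imported from Silva et al.~\cite{silva2022new}; the paper states it without proof, so there is no in-paper argument to compare against. Your outline is nonetheless the right one --- the bound is precisely the move-count analysis of the cited $1.375$-approximation algorithm: repeatedly spend $11$ transpositions to drop the $3$-norm by at least $8$ while $\norm{\spi}_3\geq 8$, then finish the residue $r<8$ with at most $\lfloor 3r/2\rfloor$ moves, and the arithmetic $\lfloor(N-8)/8\rfloor=\lfloor N/8\rfloor-1$ with unchanged residue makes the induction close (monotonicity of the right-hand side also covers the case where a sequence removes more than $8$ units of norm). Be aware, though, that as a self-contained proof this is almost entirely a reduction to two lemmas that carry all of the weight. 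First, the existence of an $(11,8)$-sequence whenever $\norm{\spi}_3\geq 8$ \emph{for arbitrary} $\epi$ is not a light invocation: Elias and Hartman prove it only for simple ($3$-)permutations, and extending it without simplification is the main technical contribution of Silva et al., obtained by a massive computer-assisted enumeration of configurations --- so citing ``the key structural lemma underlying the analyses'' is legitimate but should name the non-simplified version specifically. Second, the base cases $d_t(\epi)\leq\lfloor 3r/2\rfloor$ are cleanest if you route them through the known guarantee that every non-sorted permutation admits a $2$-move or a $(3,2)$-sequence (which yields $\lfloor 3r/2\rfloor$ for every $r$ by the same floor computation), rather than re-enumerating component shapes per residue; your residue-by-residue enumeration would work but recreates a large case analysis that the $(3,2)$-sequence lemma already packages. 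With those two citations made precise, the proof is complete and matches how the bound is actually derived in the source.
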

	
	Using this upper bound, we can devise a new upper bound for the transposition diameter for $3$-permutations (it should be noted that a $3$-permutation is a permutation on $(n+1)\equiv 0 \pmod3$ symbols).
	
	\begin{corollary}\label{upper-bound-td3}
		$TD3(n) \leq 11 \Bigl\lfloor \frac{n+1}{24} \Bigr\rfloor + \Bigl\lfloor\frac{(n+1) \bmod 24}{2}\Bigr\rfloor$.
	\end{corollary}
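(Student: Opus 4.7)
The plan is to reduce Corollary~\ref{upper-bound-td3} to Theorem~\ref{upper-bound-distance} by bounding $\norm{\spi}_3$ uniformly over all $3$-permutations on $n+1$ symbols, and then simplifying the resulting expression modulo $24$.

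First, I would describe $\norm{\spi}_3$ explicitly for an arbitrary $3$-permutation $\epi$. Since $(n+1)\equiv 0\pmod 3$ and, by definition, $\spi$ consists entirely of $3$-cycles and $1$-cycles, if I let $k$ be the number of $3$-cycles of $\spi$ and $f$ the number of fixed points, then $3k + f = n+1$ and all cycles of $\spi$ are odd. Applying Lemma~\ref{lem:norm} in the symmetric group on $\{0,1,\dots,n\}$ therefore gives $\norm{\spi}_3 = \frac{(n+1) - (k+f)}{2} = k$, so $\norm{\spi}_3 \leq (n+1)/3$ for every $3$-permutation.

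Next, I would show that the right-hand side of Theorem~\ref{upper-bound-distance}, viewed as a function of $m := \norm{\spi}_3$, is non-decreasing. A quick case analysis on $m \bmod 8$ suffices: incrementing $m$ by one either keeps $\lfloor m/8 \rfloor$ fixed and raises $\lfloor 3(m \bmod 8)/2 \rfloor$ by $1$ or $2$, or (when $m \equiv 7 \pmod 8$) raises $\lfloor m/8 \rfloor$ by one while dropping the second term from $10$ to $0$, for a net gain of $1$. Consequently, the maximum of the upper bound of Theorem~\ref{upper-bound-distance} over all $3$-permutations on $n+1$ symbols is attained at $m = (n+1)/3$, and so $TD3(n) \leq 11\bigfloor{(n+1)/24} + \bigfloor{3((n+1)/3 \bmod 8)/2}$.

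Finally, I would convert this expression to the form stated in the corollary by writing $n+1 = 24q + r$ with $r = (n+1)\bmod 24$. Because $r \equiv 0 \pmod 3$, dividing by $3$ gives $(n+1)/3 = 8q + r/3$ with $r/3 \in \{0,1,\dots,7\}$, whence $\bigfloor{(n+1)/(3\cdot 8)} = q = \bigfloor{(n+1)/24}$ and $((n+1)/3)\bmod 8 = r/3$. Substituting these identities into the bound yields $\bigfloor{3(r/3)/2} = \bigfloor{r/2} = \bigfloor{((n+1)\bmod 24)/2}$, which matches the statement exactly. The only nontrivial step in the whole argument is the monotonicity verification; everything else is accounting for residues modulo $24$ and an immediate application of the existing upper bounds.
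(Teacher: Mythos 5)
Your proof is correct and follows the route the paper intends: the corollary is stated without proof, as an immediate consequence of Theorem~\ref{upper-bound-distance} together with the observation that $\norm{\spi}_3\leq\frac{n+1}{3}$ for any $3$-permutation, followed by the residue-mod-$24$ bookkeeping you carry out. Your explicit monotonicity check of the bound as a function of $\norm{\spi}_3$ is a worthwhile addition the paper leaves implicit, since a $3$-permutation need not attain $\norm{\spi}_3=\frac{n+1}{3}$.
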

	
	Let $\epi$ be a $3$-permutation. We say $\epi$ is \emph{diametral} if $d_t(\epi)$ is equal to $TD3(n)$. One way to show that $\epi$ is diametral is by showing that $d_t(\epi)$ is equal to the upper bound of the transposition diameter stated above. This observation is useful for proving the next theorem, which provides the exact value for $TD3(n)$.
	
	\begin{theorem}
		\footnotesize{
			\begin{numcases}{TD3(n)=}
				\Bigl\lceil\frac{11(n+1)}{24}\Bigr\rceil, (n+1)\equiv 0\pmod6 \\
				\Bigl\lceil\frac{11(n-2)}{24}\Bigr\rceil + 1, (n+1)\equiv 3\pmod6
			\end{numcases}
		}
	\end{theorem}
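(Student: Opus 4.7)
The plan is to establish matching upper and lower bounds for $TD3(n)$ in each of the two residue classes of $n+1$ modulo $6$ compatible with the $3$-permutation constraint $(n+1)\equiv 0\pmod 3$. The upper bound is already furnished by Corollary~\ref{upper-bound-td3}; what remains is to rewrite it into the closed form stated in the theorem, and to produce a matching diametral $3$-permutation in each case.

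For the first case, $(n+1)\equiv 0\pmod 6$, the natural diametral candidate is the $\phi$-palisade with $\phi=(n+1)/6$. By Theorem~\ref{distance} its transposition distance equals $\lceil 11\phi/4\rceil=\lceil 11(n+1)/24\rceil$, and I would confirm that this matches the Corollary~\ref{upper-bound-td3} upper bound by a short case split over $r=(n+1)\bmod 24\in\{0,6,12,18\}$, verifying $\lceil 11r/24\rceil=\lfloor r/2\rfloor$ in each subcase.

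For the second case, $(n+1)\equiv 3\pmod 6$, the maximum $3$-norm of $\spi$ for a $3$-permutation on $n+1$ symbols is $(n+1)/3=2\phi+1$, with $\phi=(n-2)/6$. The diametral candidate I would exhibit is a $3$-permutation whose $\spi$ splits into a $\phi$-palisade-like part (the union of $\phi$ unoriented interleaving pairs, using $6\phi$ of the symbols) together with one additional oriented $3$-cycle forming a component of its own on the remaining three symbols. Its transposition distance should be $\lceil 11\phi/4\rceil+1=\lceil 11(n-2)/24\rceil+1$: the palisade part contributes $\lceil 11\phi/4\rceil$ by Theorem~\ref{distance}, while the isolated oriented $3$-cycle is sorted by a single $2$-move. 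A second case split over $r=(n+1)\bmod 24\in\{3,9,15,21\}$ would then verify equality with the upper bound.

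The step I expect to be the main obstacle is the lower bound in the second case: showing that the constructed permutation genuinely requires at least $\lceil 11\phi/4\rceil+1$ transpositions, i.e., that no sorting sequence profits from mixing transpositions across the two independent components of $\spi$. This additivity of the transposition distance over independent components is standard folklore but deserves careful justification in the present algebraic framework, most naturally by arguing that any applicable $2$-move must act within a single cycle of $\spi$, so that any cross-component transposition is a $0$- or $(-2)$-move that cannot beat the per-component lower bounds. Once this is in place, all remaining work reduces to the routine modulo-$24$ arithmetic outlined above.
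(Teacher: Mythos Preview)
Your proposal is correct and follows essentially the same route as the paper: exhibit a $\phi$-palisade (respectively, a $\phi$-palisade plus an isolated oriented $3$-cycle) as the diametral witness and verify that its distance from Theorem~\ref{distance} matches the upper bound of Corollary~\ref{upper-bound-td3} via a finite case analysis. The only cosmetic difference is that the paper substitutes $n+1=6m$ (resp.\ $6m+3$) and splits on $m\bmod 4$, whereas you split directly on $(n+1)\bmod 24$; these are the same computation. Notably, you are more scrupulous than the paper on one point: the paper simply asserts that the distance of the ``palisade plus oriented $3$-cycle'' permutation equals $\lceil 11(n-2)/24\rceil+1$ without justifying the lower bound, whereas you correctly isolate the component-additivity step as the crux and sketch the standard argument (a $2$-move must lie inside a single cycle of $\spi$, hence inside a single component).
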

	\begin{proof}
		We divide the proof into two parts.
		
		\begin{enumerate}
			\item If $(n+1)\equiv 0\pmod6$, then the palisades on $n+1$ symbols are diametral. The proof is by showing that their transposition distances are equal to the upper bound established by Corollary~\ref{upper-bound-td3}. That is, we have to show that
			\equationc{
				\Bigl\lceil\frac{11(n+1)}{24}\Bigr\rceil=11 \Bigl\lfloor \frac{n+1}{24} \Bigr\rfloor + \Bigl\lfloor\frac{(n+1) \bmod 24}{2}\Bigr\rfloor\label{eq1}.
			}
			We begin by noting that the fact $(n+1)\equiv 0\pmod6$ implies $n+1=6m$, for some $m$ integer. By replacing $n+1$ with $6m$ in the left-hand side of Equation~\ref{eq1}, we obtain: \[\lceil\frac{11(n+1)}{24}\rceil = \lceil\frac{11m}{4}\rceil.\] The same substitution made on the right-hand side yields:
			\alignc{
				11 \Bigl\lfloor \frac{n+1}{24} \Bigr\rfloor + \Bigl\lfloor\frac{(n+1) \bmod 24}{2}\Bigr\rfloor &= 11 \Bigl\lfloor \frac{6m}{24} \Bigr\rfloor + \Bigl\lfloor\frac{6m - 24\Bigl\lfloor\frac{6m}{24}\Bigr\rfloor}{2}\Bigr\rfloor \\
				&= 11 \Bigl\lfloor \frac{m}{4} \Bigr\rfloor + \Bigl\lfloor3m - 12\Bigl\lfloor\frac{m}{4}\Bigr\rfloor\Bigr\rfloor \\
				&= 11 \Bigl\lfloor \frac{m}{4} \Bigr\rfloor + 3m - 12\Bigl\lfloor\frac{m}{4}\Bigr\rfloor \\ &= 3m - \Bigl\lfloor\frac{m}{4}\Bigr\rfloor.
			}
			Now it remains to show that the equality below holds:
			\equationc{
				3m - \Bigl\lfloor\frac{m}{4}\Bigr\rfloor=\Bigl\lceil\frac{11m}{4}\Bigr\rceil\label{eq2}.
			}
			We will divide the analysis into two cases, next.
			\begin{enumerate}
				\item $m\equiv 0\pmod 4$. In this case, $m=4l$, for some integer $l$. Replacing $m$ with $4l$ in Equation~\ref{eq2} gives:
				\alignc{
					12l - \Bigl\lfloor\frac{4l}{4}\Bigr\rfloor=\Bigl\lceil\frac{11(4l)}{4}\Bigr\rceil=11l.
				}
				\item $m \not\equiv 0\pmod 4$. In this case, $m = 4l + r$ for some integer $l$ and $r \in [1, 3]$. By rewriting the left-hand side of Equation~\ref{eq2}, we have: \smallcenter{3m - \Bigl\lfloor\frac{m}{4}\Bigr\rfloor = 3(4l + r) - l = 11l + 3r.} On the right-hand side, \smallcenter{\Bigl\lceil\frac{11m}{4}\Bigr\rceil = \Bigl\lceil 11l + \frac{11r}{4}\Bigr\rceil=11l + \Bigl\lceil\frac{11r}{4}\Bigr\rceil.}
				Finally, note that $\lceil\frac{11r}{4}\rceil=3r$ for all $r \in [1, 3]$.
			\end{enumerate}
			
			\item Now we consider the case $(n+1)\equiv 3\pmod6$. In this scenario, we have identified another family of $3$-permutations that consists of $\frac{(n-2)}{6}$ unoriented interleaving pairs and an oriented $3$-cycle not intersecting with any other cycle of $\spi$. The transposition distances of these permutations are $\lceil\frac{11(n-2)}{24}\rceil + 1$, and we can prove that they are diametral (Figure~\ref{fig:diametral} shows the cycle graph~\cite{BafnaPevzner1998} of such a permutation). For this, we have to demonstrate the following equality:
			\equationc{
				\Bigl\lceil\frac{11(n-2)}{24}\Bigr\rceil + 1=11 \Bigl\lfloor \frac{n+1}{24} \Bigr\rfloor + \Bigl\lfloor\frac{(n+1) \bmod 24}{2}\Bigr\rfloor\label{eq3}.
			}
			Note that if $(n+1) \equiv 3 \pmod 6$, then $n+1=6m+3$ for some integer $m \geq 1$. By substituting this expression for $n+1$ into the left-hand side of Equation~\ref{eq3}, we obtain:
			\alignc{
				\Bigl\lceil\frac{11(n-2)}{24}\Bigr\rceil + 1 &= \Bigl\lceil\frac{11m}{4}\Bigr\rceil + 1.
			}
			The same substitution on right-hand side gives:
			\alignc{
				11 \Bfloor{\frac{n+1}{24}} + \Bfloor{\frac{(n+1) \bmod 24}{2}} &= 11 \Bfloor{\frac{6m+3}{24}} + \Bfloor{\frac{6m+3-24\Bfloor{\frac{6m+3}{24}}}{2}}\\
				&= 11 \Bfloor{\frac{6m+3}{24}} + \Bfloor{\frac{6m+3}{2}} - 12\Bfloor{\frac{6m+3}{24}}\\
				&= \Bfloor{\frac{6m+3}{2}} - \Bfloor{\frac{6m+3}{24}}\\
				&= \Bfloor{3m+\frac{3}{2}} - \Bfloor{\frac{m}{4}+\frac{3}{24}}\\
				&= 3m - \Bfloor{\frac{m}{4}+\frac{3}{24}} + 1
			}
			Our focus now is to show that:
			\equationc{
				3m - \Bfloor{\frac{m}{4}+\frac{3}{24}} + 1 = \Bigl\lceil\frac{11m}{4}\Bigr\rceil + 1 .\label{eq4}
			}
			For this, let us analyse the congruence of $m$ modulo $4$.
			\begin{enumerate}
				\item $m\equiv 0\pmod 4$. In this case, $m=4l$, for some integer $l$. By rewriting the terms, we have:
				\alignc{
					3m - \Bfloor{\frac{m}{4}+\frac{3}{24}} + 1 &= 12l + 1 - \Bfloor{l+\frac{3}{24}} \\ &= 11l + 1 \\ &= \Bigl\lceil\frac{11m}{4}\Bigr\rceil + 1.
				}
				Therefore, Equation~\ref{eq4} holds.
				\item $m \not\equiv 0\pmod 4$. In this case, $m = 4l + r$ for some integer $l$ and $r \in [1, 3]$. Rewriting the left-hand side of Equation~\ref{eq4} gives:
				\alignc{
					\Bceil{\frac{11m}{4}} + 1 &=\Bceil{\frac{11(4l+r)}{4}} + 1 \\ &=\Bceil{11l+\frac{11r}{4}} + 1 \\ &= 11l+\Bceil{\frac{11r}{4}} + 1.
				}
				On the right-hand side, we obtain:
				\alignc{
					3m - \Bfloor{\frac{m}{4}+\frac{3}{24}} + 1 &= 3(4l+r) - \Bfloor{\frac{4l+r}{4}+\frac{3}{24}} + 1\\
					&= 12l+3r - \Bfloor{l+\frac{r}{4}+\frac{3}{24}} + 1\\
					&= 12l+3r - (l + \Bfloor{\frac{r}{4}+\frac{3}{24}}) + 1 \\ &= 11l+3r + 1.
				}
				Note that the term $\lfloor\frac{r}{4}+\frac{3}{24}\rfloor$ in the expression above was cancelled since it always evaluates to $0$. The observation that $3r=\lceil\frac{11r}{4}\rceil$ holds for all $r \in [1, 3]$ concludes the proof.
			\end{enumerate}
		\end{enumerate}
	\end{proof}
	
	\section{A barrier for approximating SBT}
	
	We observe that, when sorting a permutation $\epi$ by applying sequences of transpositions obtained by a polynomial-time routine, palisades of any sizes $\phi$, $0 < 2\phi \leq \norm{\spi}_3$, may appear. The lowest achievable approximation ratio when sorting a palisade --- using Bafna and Pevzner's~\cite{BafnaPevzner1998} as lower bound, is \smallcenter{ \bigg(\Bigl\lceil\frac{11(n+1)}{24}\Bigr\rceil \mathbin{/} \frac{n+1}{3}\bigg) \geq \bigg( \frac{33}{24} = 1.375\bigg), } where $\frac{n+1}{3}$ is the $3$-norm of a palisade on $n+1$ symbols. This remark has a striking implication that the palisades are an obstacle to approximating SBT with approximation ratios lower than $1.375$.
	
	\begin{corollary}\label{cor}
		It is impossible to guarantee an approximation ratio lower than $1.375$ when sorting by transpositions using as lower bound, the one devised by Bafna and Pevzner~\cite{BafnaPevzner1998}.
	\end{corollary}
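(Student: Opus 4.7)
The plan is to exhibit an explicit infinite family of permutations for which the ratio between the true transposition distance and the Bafna--Pevzner lower bound is exactly $11/8$, and then to observe that no approximation guarantee certified against this lower bound can be better than $1.375$.

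First I would fix an arbitrary $\phi$-palisade $\epi$, so that $n+1=6\phi$ and $\spi$ is a product of $2\phi$ disjoint $3$-cycles with no $1$-cycles. Since every cycle of $\spi$ is odd of length $3$, we have ${c^\circ}_{odd}(\spi)=2\phi$, and therefore by Lemma~\ref{lem:norm} and Lemma~\ref{lem:lb-norm} the Bafna--Pevzner lower bound on $d_t(\epi)$ equals $\norm{\spi}_3=((n+1)-2\phi)/2=2\phi$. By Theorem~\ref{distance}, the true transposition distance is $d_t(\epi)=\lceil 11\phi/4\rceil$. Hence the ratio $d_t(\epi)/\norm{\spi}_3$ equals $\lceil 11\phi/4\rceil/(2\phi)$, which is exactly $11/8=1.375$ whenever $\phi$ is a multiple of $4$. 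Consequently, palisades realising this ratio exist at arbitrarily large sizes.

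Now suppose some algorithm $A$ claims an approximation ratio $\alpha$ certified via the Bafna--Pevzner lower bound, that is, $A(\epi)\leq \alpha\cdot\norm{\spi}_3$ is guaranteed for every input $\epi$. Instantiating this on any $\phi$-palisade with $\phi\equiv 0\pmod 4$, and using the trivial inequality $A(\epi)\geq d_t(\epi)=11\phi/4$, we obtain $11\phi/4\leq \alpha\cdot 2\phi$, i.e., $\alpha\geq 11/8=1.375$. Hence no certificate of approximation ratio strictly below $1.375$ can ever be produced from the Bafna--Pevzner lower bound alone, which is precisely the statement of the corollary.

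I do not anticipate any real obstacle, since all the heavy lifting has already been carried out in Theorem~\ref{distance} and in the evaluation of the Bafna--Pevzner bound via Lemma~\ref{lem:lb-norm}; the argument is a short composition of these two facts. The only point requiring care is to emphasise that this is a genuine hard barrier, not merely an asymptotic one: I would highlight that the worst-case ratio is \emph{realised} (and not only approached) by the infinite subfamily of $\phi$-palisades with $\phi\equiv 0\pmod 4$, so that the obstruction applies uniformly and cannot be circumvented by restricting to sufficiently large inputs.
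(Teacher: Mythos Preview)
Your proposal is correct and follows essentially the same approach as the paper: both compute the Bafna--Pevzner lower bound for a palisade as $\norm{\spi}_3=2\phi=(n+1)/3$, compare it with the exact distance $\lceil 11\phi/4\rceil$ from Theorem~\ref{distance}, and conclude that the ratio is at least $11/8$. The only cosmetic difference is that the paper uses the inequality $\lceil 11(n+1)/24\rceil\big/\frac{n+1}{3}\geq 33/24$ valid for all palisades, whereas you single out the subfamily $\phi\equiv 0\pmod 4$ where the ratio is exactly $11/8$ and spell out the contrapositive for an algorithm certified against the bound; both arguments are equivalent.
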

	
	To illustrate Corollary~\ref{cor}, we provide the following example: an arbitrary permutation in $S_n$ on which we apply a sequence of $2$-moves resulting in a palisade. In this example, it is worth noting that, based only on the most apparent properties of the cycles of $\spi$ (such as cycle parity), we cannot set a lower bound for $d_t(\epi)$ that is tighter than the one proposed by Bafna and Pevzner~\cite{BafnaPevzner1998}.
	
	\begin{example}
		Consider $\epi=(0\;23\;22\;21\;1\;6\;5\;11\;20\;10\;9\;8\;13\;4\;3\;7\;12\;18\;2\;17\;16$ $\;15\;14\;19)$ (cycle graph~\cite{BafnaPevzner1998} depicted in Figure~\ref{fig:arbitrary}). We can apply on $\epi$ a sequence of three $2$-moves, namely: $\tau_1=(1\;20\;10)$, $\tau_2=(2\;17\;7)$ and $\tau_3=(4\;18\;11)$, giving $\tau_3\tau_2\tau_1\epi=(0\;23\;22\;21\;20\;1\;6\;5\;4\;3\;2\;7\;12\;11\;10\;9\;8\;13\;18\;17\;16\;15\;14\;19)$, which is a $4$-palisade.
	\end{example}

	
	We highlight that attempting to determine in advance whether a sequence of transpositions will result in a palisade would be impractical. In this regard, we recall that,  even for a $3$-permutation, deciding whether or not a single $0$-move is required for a sorting, is $\mathcal{NP}$-hard~\cite{bulteau2012sorting}.
	
	\section{Conclusions}
	
	This study focused on palisades, a family of permutations that are particularly ``hard'' to sort. We demonstrated that for $(n+1) \equiv 0\pmod 6$, palisades qualify as diametral permutations for the $3$-permutations subset of $S_n$. Moreover, we have shown that there exists another family of diametral $3$-permutations for $(n+1) \equiv 3\pmod 6$ (a $3$-permutation is a permutation on $(n+1) \equiv 0\pmod 3$). Elias and Hartman~\cite{EliasHartman2006} had previously established an upper bound for the transposition diameter for $3$-permutations. In this work, we have provided the exact value.
	
	Interestingly, our findings also indicate that palisades pose an obstacle to achieving approximation ratios lower than $1.375$ for SBT. This partly answers a question raised by Bulteau, Fertin and Rusu~\cite{bulteau2012sorting} about the feasibility of devising a polynomial-time scheme for approximating SBT. Such a scheme is not possible using as lower bound, the well-known one by Bafna and Pevzner~\cite{BafnaPevzner1998}. Therefore, for the study of approximate solutions for SBT to advance, it is crucial to find new lower bounds for the transposition distance.
	
	\appendix
	
	\section{Cycle graph}\label{appendix-a}
	
	The cycle graph is a commonly used graphical representation for permutations in the context of SBT~\cite{BafnaPevzner1998}. To construct the cycle graph of $\pi=[\pi_1\;\pi_2\dots \pi_n]$, two extra symbols $\pi_0=0$ and $\pi_{n+1}=n+1$ are added to it. Thus, \emph{cycle graph} of $\pi$, denoted by $G(\pi)$, is a directed graph consisting of the set of vertices $\{+0,\;-1,\;+1,\;-2,\;+2,\;\dots,$ $\;-n,\;+n,\;-(n+1)\}$ and a set of coloured edges (which can be black or gray). The black edges connect $-\pi_{i}$ to $+\pi_{i-1}$ for all $i\in[1,n+1]$. The gray edges connect vertex $+i$ to vertex $-(i+1)$ for all $i\in[0,n]$. The black edges represent the current state of genes in the chromosome of $\pi$, while the gray edges indicate the desired order in the second permutation $\iota=[1\;2\dots n]$. Figures~\ref{fig:palisade-1}, ~\ref{fig:palisade-2} and ~\ref{fig:diametral} provides examples of cycle graphs.
	
	Each vertex in $G(\pi)$ has in-degree and out-degree equal to $1$, with one black edge entering and one gray edge leaving. This induces a unique cycle decomposition in $G(\pi)$. A \emph{$\kappa$-cycle} is a cycle $C$ in $G(\pi)$ with $\kappa$ black edges. If $\kappa$ is \emph{even (odd)}, we also say that $C$ is an \emph{even} (\emph{odd}) cycle.
	
	A transposition $\tau(i,j,k)$, with $1\leq i<j<k\leq n+1$, exchanges the block of symbols corresponding to interval $[\pi_i,\pi_{j-1}]$ with the block of symbols corresponding to interval $[\pi_j,\pi_{k-1}]$. Thus, applying $\tau(i,j,k)$ on $\pi$, denoted by $\tau(i,j,k)\cdot\pi$ or just $\tau\cdot\pi$, yields the new permutation $[\pi_1\dots\pi_{i-1}\;\pi_j\dots\pi_{k-1}$ $\;\pi_i\dots\pi_{j-1}\;\pi_k\dots\pi_n]$. Let $c_{odd}(\pi)$ be the number of odd cycles in $G(\pi)$ and let $\Delta c_{odd}(\pi,\tau)=c_{odd}(\tau\cdot\pi)-c_{odd}(\pi)$ denote the change in the number of odd cycles in $G(\pi)$ after the application of a transposition $\tau$. The following lower bound, formulated by Bafna and Pevzner~\cite{BafnaPevzner1998}, is the foundation of the most prevalent approximation algorithms for SBT.
	
	\begin{theorem}[Bafna and Pevzner~\cite{BafnaPevzner1998}]
		\label{th:bafna}
		$d(\pi)\geq\frac{n+1-c_{odd}(\pi)}{2}$.
	\end{theorem}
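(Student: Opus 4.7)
The plan is to prove the Bafna--Pevzner lower bound by combining two ingredients: the exact count of odd cycles of the cycle graph at the sorted target, and a cap on how much a single transposition can change $c_{odd}$. A straightforward telescoping then yields the inequality.

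First, I would compute $c_{odd}(\iota)$ by inspecting $G(\iota)$ directly. For the identity we have $\pi_i = i$, so the black edge at position $i$ connects $-i$ to $+(i-1)$, while the gray edge out of $+(i-1)$ goes to $-i$. Thus for each $i \in [1, n+1]$ the pair $\{-i,\, +(i-1)\}$ forms a cycle with a single black edge, giving $n+1$ cycles of length one. All of these are odd, so $c_{odd}(\iota) = n+1$.

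Second, I would show that for any applicable transposition $\tau$,
\begin{equation*}
\Delta c_{odd}(\pi, \tau) \in \{-2,\, 0,\, +2\}.
\end{equation*}
The shortest route is to invoke Proposition~\ref{prop} (Meidanis--Dias--Mira), which gives exactly this statement in the algebraic setting for $\Delta c^\circ_{odd}(\spi, \tau)$, and then transfer it to $G(\pi)$ via the one-to-one correspondence between the disjoint cycles of $\spi$ and the cycles of $G(\pi)$ established by Silva et al.~\cite{silva2022new}. A self-contained combinatorial proof is also possible: a transposition $\tau(i,j,k)$ alters only the three black edges at positions $i$, $j$, $k$ and therefore touches at most three cycles of $G(\pi)$; a case analysis on how these edges reconnect shows that the total number of cycles changes by $-1$, $+1$, or $\pm 2$, while the parity argument from the permutation-group section (that every transposition is an even permutation) forces the change in the count of odd-length cycles to itself be even, leaving only $\{-2, 0, +2\}$.

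Finally, I would close by telescoping. Let $\tau_1, \dots, \tau_t$ be any sorting sequence with $t = d(\pi)$, and write $\pi_0 = \pi$, $\pi_i = \tau_i \cdot \pi_{i-1}$, so that $\pi_t = \iota$. Then
\begin{equation*}
n+1 - c_{odd}(\pi) \;=\; c_{odd}(\iota) - c_{odd}(\pi_0) \;=\; \sum_{i=1}^{t} \Delta c_{odd}(\pi_{i-1}, \tau_i) \;\leq\; 2t,
\end{equation*}
which rearranges to $d(\pi) \geq \frac{n+1 - c_{odd}(\pi)}{2}$. The main obstacle is the second ingredient: obtaining $\Delta c_{odd}(\pi,\tau) \leq 2$ in a self-contained way. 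The algebraic route via Proposition~\ref{prop} plus the cycle-graph correspondence of Silva et al.\ reduces it to already-cited material, which is the cleanest option; the direct combinatorial analysis of the three affected black edges is standard but requires tracking cycle parities rather than just cycle counts.
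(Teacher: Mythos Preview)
Your argument is correct: computing $c_{odd}(\iota)=n+1$, bounding $\Delta c_{odd}$ by $2$ per move, and telescoping is exactly the classical Bafna--Pevzner proof, and your appeal to Proposition~\ref{prop} together with the Silva et~al.\ correspondence is a legitimate shortcut for the second ingredient.

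Note, however, that the paper does not actually prove Theorem~\ref{th:bafna}; it is stated in the appendix as a cited result. What the paper \emph{does} prove is the equivalent algebraic statement, Lemma~\ref{lem:lb-norm}, and it gets there by a different route than yours. Rather than tracking the per-move change in the number of odd cycles and telescoping, the paper rewrites the sorting equation as $\tau_t\cdots\tau_1=\eiota\epi^{-1}$ (Equation~\ref{eqtdp2}), observes that any sorting therefore expresses $\eiota\epi^{-1}$ as a product of $t$ many $3$-cycles, and concludes $d_t(\epi)\geq\norm{\eiota\epi^{-1}}_3$ directly from the definition of the $3$-norm; the explicit formula then comes from Lemma~\ref{lem:norm}. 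Equivalence with Theorem~\ref{th:bafna} is asserted via the cycle-graph correspondence. So the paper's argument is a ``global'' factorization-length bound, while yours is the ``local'' incremental bound; the paper's version is a one-liner once Lemma~\ref{lem:norm} is in hand and never needs Proposition~\ref{prop} or the value of $c_{odd}(\iota)$, whereas your version is self-contained in the cycle-graph language and makes the role of $2$-moves transparent.
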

	
	\begin{figure*}[ht]
		\centering
		\includegraphics[width=\textwidth]{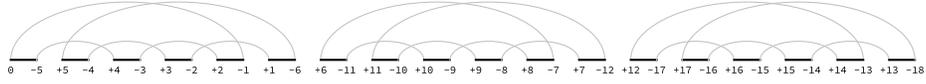}
		\caption{Cycle graph of the $3$-palisade $[5\;4\;3\;2\;1\;6\;11\;10\;9\;8\;7\;12\;17\;16\;15\;14\;$ $13]$.}
		\label{fig:palisade-1}
	\end{figure*}
	
	\begin{figure*}[ht]
		\centering
		\includegraphics[width=\textwidth]{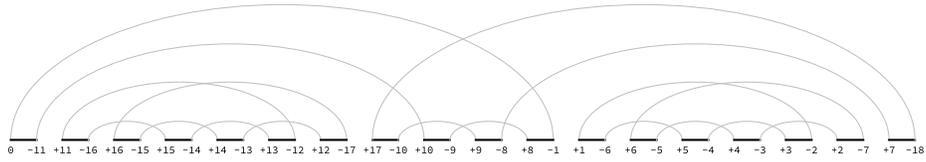}
		\caption{Cycle graph of the $3$-palisade $[11\;16\;15\;14\;13\;12\;17\;10\;9\;8\;1\;6\;5\;4\;3\;2$ $\;7]$.}
		\label{fig:palisade-2}
	\end{figure*}
	
	\begin{figure*}[ht]
		\centering
		\includegraphics[width=\textwidth]{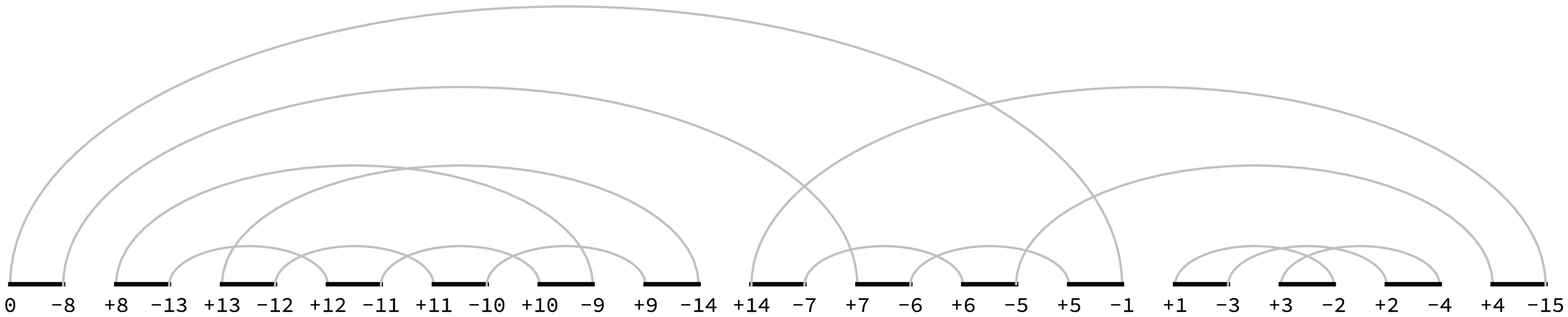}
		\caption{Cycle graph of the diametral $3$-permutation $[8\;13\;12\;11\;10\;9\;14\;7\;6\;5\;$ $1\;3\;2\;4]$.}
		\label{fig:diametral}
	\end{figure*}
	
	\begin{figure*}[ht]
		\centering
		\includegraphics[width=1.0\textwidth]{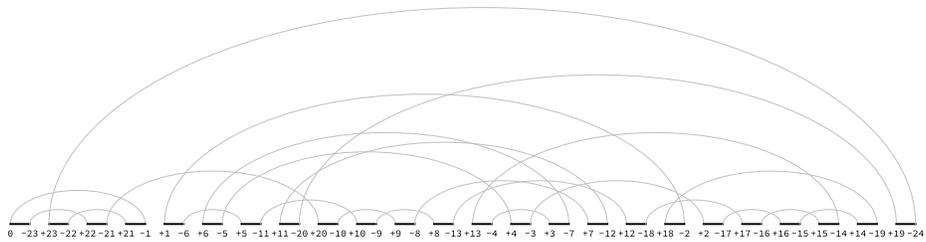}
		\caption{Cycle graph of $[23\;22\;21\;1\;6\;5\;11\;20\;10\;9\;8\;13\;4\;3\;7\;12\;18\;2\;17\;16\;15\;$ $14\;19]$.}
		\label{fig:arbitrary}
	\end{figure*}
	
    \clearpage
	
	\bibliographystyle{splncs04} 
	\bibliography{manuscript}
\end{document}